\newtheorem{theorem}{Theorem}
\newtheorem{corollary}{Corollary}
\newtheorem{lemma}{Lemma}
\newtheorem{remark}{Remark}
\newtheorem{definition}{Definition}
\newtheorem{proposition}{Proposition}
\newtheorem{example}{Example}
\newtheorem{assumption}{Assumption}
\DeclareMathOperator*{\Bigcdots}{\scalerel*{\cdots}{\bigodot}}
\title{\LARGE \bf
Secure Distributed State Estimation of an LTI system over Time-varying Networks and Analog Erasure Channels
}
\author{Aritra~Mitra~and~Shreyas Sundaram
\thanks{The authors are with the School of Electrical and Computer Engineering at Purdue University. Email: {\tt \{mitra14,sundara2\}@purdue.edu}}%
}
\begin{document}
\maketitle
\thispagestyle{empty}
\pagestyle{empty}


\begin{abstract} We study the problem of collaboratively estimating the state of an LTI system monitored by a network of sensors, subject to the following important practical considerations: (i) certain sensors might be arbitrarily compromised by an adversary and (ii) the underlying communication graph governing the flow of information across sensors might be time-varying. We first analyze a scenario involving intermittent communication losses that preserve certain information flow patterns over bounded intervals of time. By equipping the sensors with adequate memory, we show that one can obtain a fully distributed, provably correct state estimation algorithm that accounts for arbitrary adversarial behavior, provided certain conditions are met by the network topology. We then argue that our approach can handle bounded communication delays as well. Next, we explore a case where each communication link stochastically drops packets based on an analog erasure channel model. For this setup, we propose state estimate update and information exchange rules, along with conditions on the network topology and packet drop probabilities, that guarantee mean-square stability despite arbitrary adversarial attacks.
\end{abstract}

\section{Introduction}
Consider a scenario where a group of  sensors deployed over a geographical network seek to cooperatively estimate the state of a dynamical process. This general setup constitutes the standard distributed state estimation problem and finds applications in various domains such as power systems, transportation networks, automated factories, and distributed robotics. As envisioned in \cite{survey1}, to fully leverage the benefits of a distributed sensor network as described above, one must design algorithms and networks that can respond to dynamic environments involving unreliable components. In particular, the growing need for designing secure networked control systems necessitates the design of localized algorithms that operate reliably in the face of adversarial sensor attacks. In this context, unavoidable network-induced issues such as communication losses and delays offer additional degrees of freedom for adversaries to devise carefully crafted attacks, thereby significantly compounding the estimation problem.\footnote{We discuss how adversaries can use communication losses and delays to their advantage later in the paper.} In light of the above discussion, the design of attack-resilient, provably correct distributed state estimation algorithms that account for various types of communication losses and delays will be the subject of our present investigation. 

\textbf{Related Work:} The classical distributed state estimation problem as described above, has been studied extensively over the past decade; however, single-time-scale algorithms that solve such problems under the most general conditions on the system and network have been proposed only recently in \cite{martins3,allerton,mitra16arxiv,wang}. While the problem of detecting and mitigating various forms of data-injection attacks in deterministic \cite{fawzi,pasqualetti,joao2} and stochastic \cite{bai1,bai2} centralized\footnote{Here, by a centralized system, we refer to a system where the measurements of all the sensors are available at a single location.} systems is now well understood, tackling adversarial behavior in the context of distributed state estimation remains largely unexplored. The limited literature that seeks to address this problem either provides no theoretical guarantees \cite{sec1,sec3} or limits the class of admissible attacks \cite{deghat}. In an initial effort to bridge the gap between centralized and distributed secure state estimation, we recently developed a distributed observer that allows each non-compromised sensor to asymptotically recover the entire state dynamics despite \textit{arbitrary} adversarial sensor attacks, under appropriate conditions on the network topology \cite{mitraCDC}. However, our proposed technique did not account for the challenges introduced by communication drop-outs or delays.  This leads us to the contributions of the present paper.

\textbf{Contributions:} Our contributions are twofold. First, in Section \ref{sec:SWLFSE}, we consider a communication loss model where certain information flow patterns are preserved deterministically over bounded intervals of time. For this communication loss model, we show how sensors equipped with memory can process delayed state estimates received from neighbors (some of whom can be potentially adversarial) to asymptotically estimate the state of the system. Our algorithm is inspired by recent work that addresses the resilient consensus problem in asynchronous settings \cite{saldana17}, \cite{dibaji17}. As a byproduct of our analysis, we argue that the proposed  strategy accounts for bounded communication delays as well. We also characterize the convergence rate of our algorithm in terms of the system instability, the upper bound on the delay, and certain properties of the underlying communication graph. The second main result of the paper, presented in Section \ref{sec:analog}, pertains to a network whose communication links are modeled as analog erasure channels that may or may not introduce random delays. We show how a graph metric known as `strong-robustness' (introduced in our prior work \cite{mitraCDC}) can help tolerate higher erasure probabilities, while guaranteeing mean-square stability of the estimation error dynamics. Finally, we emphasize that all our results apply to a sophisticated and worst-case adversarial model (termed Byzantine adversaries) which is typically considered in the literature on resilient distributed algorithms \cite{vaidyacons,rescons,Sundaramopt,su} .
\section{System and Attack Model}
\textbf{Notation:} A directed graph is denoted by $\mathcal{G} =(\mathcal{V},\mathcal{E})$, where $\mathcal{V} =\{1, \cdots, N\}$ is the set of nodes and $\mathcal{E} \subseteq \mathcal{V} \times \mathcal{V} $ represents the edges. An edge from node $j$ to node $i$, denoted by (${j,i}$), implies that node $j$ can transmit information to node $i$. The neighborhood of the $i$-th node is defined as $\mathcal{N}_i \triangleq \{j\,|\,(j,i) \in \mathcal{E} \}.$ The notation $|\mathcal{V}|$ is used to denote the cardinality of a set $\mathcal{V}$. Throughout the rest of this paper, we use the terms `edges' and `communication links/channels' interchangeably. The set of all eigenvalues (modes) of a matrix $\mathbf{A}$ is denoted by $sp(\mathbf{A}) = \{\lambda \in \mathbb{C}\,|\,det(\mathbf{A}-\lambda\mathbf{I}) = 0\}$ and the set of all marginally stable and unstable eigenvalues of $\mathbf{A}$ is denoted by $\Lambda_{U}(\mathbf{A}) = \{\lambda \in sp(\mathbf{A})\,|\, |\lambda| \geq 1 \}$. We use $diag(\mathbf{A}_1,\cdots,\mathbf{A}_2)$ to refer to a block-diagonal matrix with the matrix $\mathbf{A}_i$ as the $i$-th block entry. The notation $\mathbb{Z}_{\geq 0}$ is used to denote the set of all non-negative integers, and for a random variable $\mathbb{X}$, its expected value is denoted by $E[\mathbb{X}]$.

\textbf{System Model:} Consider the linear dynamical system
\begin{equation}
\mathbf{x}[k+1] = \mathbf{Ax}[k],
\label{eqn:plant}
\end{equation}
where $k \in \mathbb{Z}$ is the discrete-time index, $\mathbf{x}[k] \in {\mathbb{R}}^n$ is the state vector and  $\mathbf{A} \in {\mathbb{R}}^{ n \times n} $ is the system matrix. The system is monitored by a network $\mathcal{G}=(\mathcal{V,E})$ consisting of $N$ nodes. The $i$-th node has partial measurement of the state $\mathbf{x}[k]$:
\begin{equation}
\mathbf{y}_{i}[k]=\mathbf{C}_i\mathbf{x}[k],
\label{eqn:Obsmodel}
\end{equation}
where $\mathbf{y}_{i}[k] \in {\mathbb{R}}^{r_i}$ and $\mathbf{C}_i \in {\mathbb{R}}^{r_i \times n}$. We denote $\mathbf{y}[k]={\begin{bmatrix}\mathbf{y}^T_{1}[k] & \cdots & \mathbf{y}^T_{N}[k]\end{bmatrix}}^T$, and $\mathbf{C}={\begin{bmatrix}\mathbf{C}^T_{1} & \cdots & \mathbf{C}^T_{N}\end{bmatrix}}^{T}$. 

In the standard distributed state estimation problem, each node is required to  estimate the state $\mathbf{x}[k]$ using its own measurements and the information received from its neighbors, where such information flow is restricted by the underlying communication graph $\mathcal{G}$. A challenging scenario emerges when one seeks to solve the same problem in the presence of malicious nodes in the network. We now formally describe the adversary model considered throughout the paper.  

\textbf{Adversary Model:} We consider a subset $\mathcal{A} \subset \mathcal{V}$ of the nodes in the network to be adversarial. We assume that the adversarial nodes are completely aware of the network topology (and any variations to such topology due to communication drop-outs), the system dynamics and the algorithm employed by the non-adversarial nodes. In terms of capabilities, an adversarial node can leverage the aforementioned information to arbitrarily deviate from the rules of any prescribed algorithm, while colluding with other adversaries in the process. Furthermore, following the Byzantine fault model\cite{Byz}, adversaries are allowed to send differing state estimates to different neighbors at the same instant of time. This assumption of omniscient adversarial behavior is standard in the literature on resilient and secure distributed algorithms \cite{vaidyacons,rescons,Sundaramopt,su}, and allows us to provide guarantees against ``worst-case" adversarial behavior. In terms of their density in the network, we assume that there are at most $f$ adversarial nodes in the neighborhood of any non-adversarial node; this property will be referred to as the `$f$-local' property of the adversarial set. Summarily, the adversary model described thus far will be called an $f$-local Byzantine adversary model.  The non-adversarial nodes will be referred to as regular nodes and be represented by the set $\mathcal{R}=\mathcal{V}\setminus\mathcal{A}$. Finally, note that the number and identities of the adversarial nodes are not known to the regular nodes.

\textbf{Objective:} Given the LTI system \eqref{eqn:plant}, the measurement model \eqref{eqn:Obsmodel}, a communication graph $\mathcal{G}$, and the $f$-local Byzantine adversary model described above, our  objective in this paper is to design state estimate update and information exchange rules that guarantee asymptotic convergence (in a deterministic or stochastic sense) of the estimates of the regular nodes to the true state of the plant, under different types of communication loss models.

\section{Preliminaries}
Before developing our estimation strategy, we first establish certain terminology, notation, and key ideas in this section. To begin with, the underlying communication graph $\mathcal{G}$ that dictates the flow of information among nodes in the absence of any communication losses will be referred to as the \textit{baseline communication graph}. The loss of communication between nodes is modeled by a time-varying graph $\mathcal{G}[k]=(\mathcal{V},\mathcal{E}[k])$, where $\mathcal{E}[k] \subseteq \mathcal{E}$. Regarding system \eqref{eqn:plant}, we make the following assumption for clarity of exposition.\footnote{The results presented in this paper can however be extended to system matrices with arbitrary spectrum via a more detailed technical analysis.}

\begin{assumption} The system matrix $\mathbf{A}$ has real, distinct eigenvalues.
\end{assumption}

Based on the above assumption, one can perform a coordinate transformation $\mathbf{z}[k]\triangleq \mathbf{Vx}[k]$ on \eqref{eqn:plant}, with an appropriate non-singular matrix $\mathbf{V}$, to obtain\footnote{As this only relies on the knowledge of the system matrix $\mathbf{A}$ (which is assumed to be known by all the nodes), all of the nodes can do this in a distributed manner.}
\begin{equation}
\begin{split}
\mathbf{z}[k+1] &= \mathbf{Mz}[k] = diag(\lambda_1, \cdots, \lambda_n)\mathbf{z}[k],\\
\mathbf{y}_i[k] &= \bar{\mathbf{C}}_i\mathbf{z}[k], \quad \forall i \in \{1, \cdots, N\} \enspace 
\end{split}
\label{eqn:plant_tr}
\end{equation}
where $sp(\mathbf{A})=diag(\lambda_1, \cdots, \lambda_n)$, $\mathbf{M}=\mathbf{VA}\mathbf{V}^{-1}$ and $\bar{\mathbf{C}}_i=\mathbf{C}_i\mathbf{V}^{-1}$. Commensurate with this decomposition, the $j$-th component  of the state vector $\mathbf{z}[k]$ will be denoted by $z^{(j)}[k]$, and will be referred to as the component corresponding to the eigenvalue $\lambda_j$. Since recovering $\mathbf{z}[k]$ is equivalent to recovering $\mathbf{x}[k]$, we focus on estimating $\mathbf{z}[k]$. To this end, we will use the following definition of \textit{source nodes}. 

\begin{definition}
(\textbf{Source nodes})
For each $\lambda_j \in \Lambda_{U}(\mathbf{A})$, the set of nodes that can detect $\lambda_j$ is denoted by $\mathcal{S}_j$, and called the set of \textit{source nodes} for $\lambda_j$. 
\end{definition}

Let $\Omega_{U}(\mathbf{A}) \subseteq \Lambda_{U}(\mathbf{A})$ contain the set of eigenvalues of $\mathbf{A}$ for which $\mathcal{V}\setminus{\mathcal{S}_j}$ is non-empty. Then, for each $\lambda_j \in \Omega_{U}(\mathbf{A})$, our strategy requires the source nodes $\mathcal{S}_j$ to maintain local\footnote{Here, by `local' we imply that such observers can be constructed and run without any information from neighbors.} Luenberger observers for estimating $z^{(j)}[k]$, while the non-source nodes rely on a secure consensus protocol for the same. For any node $i$, let the set of eigenvalues it can detect be denoted by $\mathcal{O}_i$, and let $\mathcal{UO}_i=sp(\mathbf{A})\setminus\mathcal{O}_i$. Then, the following result from \cite{mitraCDC} states that node $i$ can estimate the locally detectable portion of $\mathbf{z}[k]$, referred to as $\mathbf{z}_{\mathcal{O}_i}[k]$, \textit{without} interacting with its neighbors.

\begin{lemma}
Suppose Assumption 1 holds. Then, for each regular node $i\in\mathcal{R}$ and each $\lambda_j \in \mathcal{O}_{i}$, a local Luenberger observer can be constructed that ensures that $\lim_{k\to\infty}|\hat{z}^{(j)}_i[k]-z^{(j)}[k]|=0$, where $\hat{z}^{(j)}_i[k]$ denotes the estimate of $z^{(j)}[k]$ maintained by node $i$.
\label{lemma:luen}
\end{lemma}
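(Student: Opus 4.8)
The plan is to exploit the modal (diagonal) structure guaranteed by Assumption 1 to reduce the estimation of the detectable modes to a standard Luenberger observer design for a detectable pair. First I would fix a regular node $i\in\mathcal{R}$ and recall that, in the transformed coordinates \eqref{eqn:plant_tr}, the dynamics are fully decoupled, $z^{(j)}[k+1]=\lambda_j z^{(j)}[k]$, while each entry of $\mathbf{y}_i[k]=\bar{\mathbf{C}}_i\mathbf{z}[k]$ is a linear combination of the modes. Because $\mathbf{M}=diag(\lambda_1,\dots,\lambda_n)$ has distinct eigenvalues, the Popov--Belevitch--Hautus test specializes cleanly: mode $j$ is observable from node $i$ if and only if the $j$-th column of $\bar{\mathbf{C}}_i$ is nonzero. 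Reading ``node $i$ can detect $\lambda_j$'' in the usual detectability sense, i.e.\ the mode is either observable from $\mathbf{y}_i[k]$ or asymptotically stable, this means that for every undetectable eigenvalue $\lambda_j\in\mathcal{UO}_i$ the corresponding column of $\bar{\mathbf{C}}_i$ vanishes, so the undetectable modes contribute nothing to node $i$'s output.

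Next I would assemble the detectable subsystem. Reordering the state as $\mathbf{z}=\begin{bmatrix}\mathbf{z}_{\mathcal{O}_i}^T & \mathbf{z}_{\mathcal{UO}_i}^T\end{bmatrix}^T$ and partitioning $\bar{\mathbf{C}}_i=\begin{bmatrix}\bar{\mathbf{C}}_{i,\mathcal{O}_i} & \bar{\mathbf{C}}_{i,\mathcal{UO}_i}\end{bmatrix}$ accordingly, the previous step gives $\bar{\mathbf{C}}_{i,\mathcal{UO}_i}=\mathbf{0}$, hence $\mathbf{y}_i[k]=\bar{\mathbf{C}}_{i,\mathcal{O}_i}\mathbf{z}_{\mathcal{O}_i}[k]$ with $\mathbf{z}_{\mathcal{O}_i}[k+1]=\mathbf{M}_{\mathcal{O}_i}\mathbf{z}_{\mathcal{O}_i}[k]$, where $\mathbf{M}_{\mathcal{O}_i}=diag(\lambda_j : \lambda_j\in\mathcal{O}_i)$. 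I would then verify that the pair $(\mathbf{M}_{\mathcal{O}_i},\bar{\mathbf{C}}_{i,\mathcal{O}_i})$ is detectable: since $\mathbf{M}_{\mathcal{O}_i}$ still has distinct eigenvalues, the same column criterion applies, and every retained mode with $|\lambda_j|\geq 1$ keeps its nonzero column under the restriction, so it is observable in the reduced pair, whence every unobservable mode of the reduced pair is stable.

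Finally I would invoke the standard discrete-time observer synthesis. Detectability of $(\mathbf{M}_{\mathcal{O}_i},\bar{\mathbf{C}}_{i,\mathcal{O}_i})$ allows a gain $\mathbf{L}_i$ to be chosen so that $\mathbf{M}_{\mathcal{O}_i}-\mathbf{L}_i\bar{\mathbf{C}}_{i,\mathcal{O}_i}$ is Schur stable. Defining the local observer $\hat{\mathbf{z}}_{\mathcal{O}_i}[k+1]=\mathbf{M}_{\mathcal{O}_i}\hat{\mathbf{z}}_{\mathcal{O}_i}[k]+\mathbf{L}_i\left(\mathbf{y}_i[k]-\bar{\mathbf{C}}_{i,\mathcal{O}_i}\hat{\mathbf{z}}_{\mathcal{O}_i}[k]\right)$, the estimation error $\mathbf{e}_i[k]=\hat{\mathbf{z}}_{\mathcal{O}_i}[k]-\mathbf{z}_{\mathcal{O}_i}[k]$ obeys $\mathbf{e}_i[k+1]=(\mathbf{M}_{\mathcal{O}_i}-\mathbf{L}_i\bar{\mathbf{C}}_{i,\mathcal{O}_i})\mathbf{e}_i[k]\to\mathbf{0}$, and reading off the $j$-th component yields $\lim_{k\to\infty}|\hat{z}^{(j)}_i[k]-z^{(j)}[k]|=0$ for every $\lambda_j\in\mathcal{O}_i$. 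Since this observer is driven solely by $\mathbf{y}_i[k]$, it is genuinely local, as claimed.

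The step I expect to require the most care is the first one: pinning down the precise meaning of detectability of an individual eigenvalue in this modal setting and arguing, via the distinct-eigenvalue structure, that the undetectable modes drop out of $\mathbf{y}_i[k]$ entirely. Everything downstream --- forming the reduced detectable pair and designing its observer --- is textbook once it has been established that node $i$'s output sees only the modes in $\mathcal{O}_i$.
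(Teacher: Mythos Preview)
Your proof is correct. The paper itself does not prove this lemma; it is quoted as a known result from the authors' prior work \cite{mitraCDC} and invoked without argument. What you have supplied is precisely the standard constructive proof one would expect in that reference: use the diagonal structure from Assumption~1 together with the PBH test to show that undetectable modes vanish from $\mathbf{y}_i[k]$, restrict to the detectable sub-pair $(\mathbf{M}_{\mathcal{O}_i},\bar{\mathbf{C}}_{i,\mathcal{O}_i})$, and place observer poles via a gain $\mathbf{L}_i$. Your proof also makes explicit a fact the paper later leans on in the proof of Theorem~\ref{theo:main}, namely that the local error dynamics are linear and hence the convergence is exponential with some rate $\gamma^{(j)}_l\in(0,1)$.
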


The real challenge is posed by the task of estimating the locally undetectable dynamics, since it necessitates communicating with neighbors, some of whom might be adversarial. In fact, the traditional metric of graph connectivity which plays a pivotal role in the analysis of fault-tolerant and secure distributed algorithms \cite{sundaramtac,fabio}, cannot capture the requirements to be met by a sensor network for addressing adversarial behavior in the context of distributed state estimation. A simple illustration of this fact is as follows.

\begin{example} Consider a scalar unstable plant monitored by a clique of $N+2$ nodes, as depicted in Figure \ref{fig:example}. Nodes $s_1$ and $s_2$ are the only nodes with non-zero measurements, i.e., they are the source nodes for this system. Although this network is fully connected, the presence of a single adversarial node makes it impossible for \textbf{any} algorithm to guarantee estimation of $x[k]$ at every regular node. This remains true even if every regular node possesses knowledge of the network topology. Specifically, if the adversary compromises one of the two source nodes, then it can behave in a way that makes it impossible for the non-source nodes to distinguish between two different state trajectories of the system, due to the conflicting information from the two source nodes.\footnote{We omit details of such an attack strategy due to space constraints. For centralized systems where $f$ sensors are compromised, \cite{fawzi,joao2} have shown that for recovering the state of the system asymptotically, the system must remain detectable after the removal of any $2f$ sensors.}
\end{example}

The above example alludes to the need for a certain amount of redundancy in the measurement structure \textit{and} the network topology. To this end, in \cite{mitraCDC}, we proposed an algorithm that made use of certain directed acyclic subgraphs in addressing the secure distributed estimation problem; properties of such subgraphs are described below.

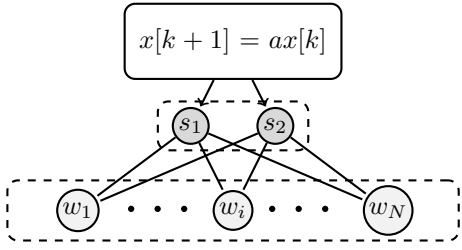
\begin{figure}[t]
\begin{center}
\begin{tikzpicture}
[->,shorten >=1pt,scale=.75,inner sep=1pt, minimum size=12pt, auto=center, node distance=3cm,
  thick, node/.style={circle, draw=black, thick},]
\tikzstyle{block1} = [rectangle, draw, fill=white, 
    text width=8em, text centered, rounded corners, minimum height=1cm, minimum width=1cm];
 \node [block1]  at (0.75,9) (plant) {$x[k+1]= ax[k]$};
\node [circle, draw, fill=gray!30](n1) at (0,7.5)     (1)  {$s_1$};
\node [circle, draw, fill=gray!30](n2) at (1.5,7.5)   (2)  {$s_2$};
\node [circle, draw, fill=gray!10]  at (-2,6)   (3)  {$w_1$};
\node [circle, draw, fill=gray!10] at (3.5,6) (4) {$w_N$};
\node [circle, draw, fill=gray!10]  at (0.75,6) (5) {$w_i$};
\node [black] at (-0.5,6) () {$\Bigcdots$};
\node [black] at (2,6) () {$\Bigcdots$};
\node (rect) at (0.75,7.5) [draw, dashed, rounded corners, minimum width=2cm, minimum height=0.65cm] {};

\node (rect) at (0.75,6) [draw, dashed,  rounded corners, minimum width=6cm, minimum height=0.8cm] {};
\path[every node/.style={font=\sffamily\small}]

     (plant)
         edge [right] node [] {}(1)
         edge [right] node [] {}(2)
         
     (1)
         edge [-] node [] {}(3)
         edge [-] node [] {}(4)
         edge [-] node [] {}(5)
         
     (2)
         edge [-] node [] {}(3)
         edge [-] node [] {}(4)
         edge [-] node [] {}(5);        
\end{tikzpicture}
\end{center}
\caption{A scalar unstable plant is monitored by a clique of $N+2$ nodes, where $s_1$ and $s_2$ are the only source nodes. A single adversary corrupting either of the two sources can render the distributed state estimation problem impossible, irrespective of the choice of algorithm.}
\label{fig:example}
\vspace{-0.5cm}
\end{figure}

\begin{definition}(\textbf{Mode Estimation Directed Acyclic Graph (MEDAG)}) For each eigenvalue $\lambda_j \in \Omega_{U}(\mathbf{A})$, suppose there exists a spanning subgraph $\mathcal{G}_j = (\mathcal{V},\mathcal{E}_j)$ of $\mathcal{G}$ with the following properties. 
\begin{itemize}
\item[(i)] If $i \in \{\mathcal{V}\setminus\mathcal{S}_j\} \cap \mathcal{R}$, then $|\mathcal{N}^{(j)}_i| \geq 2f+1$, where $\mathcal{N}^{(j)}_i=\{l|(l,i) \in \mathcal{E}_j\}$ represents the neighborhood of node $i$ in $\mathcal{G}_j$.
\item[(ii)] There exists a partition of $\mathcal{R}$ into the sets $\{\mathcal{L}^{(j)}_{0},\mathcal{L}^{(j)}_{1}, \cdots , \mathcal{L}^{(j)}_{T_j}\}$, where $\mathcal{L}^{(j)}_{0} = \mathcal{S}_j \cap \mathcal{R}$, and if $i \in \mathcal{L}^{(j)}_m$ (where $1 \leq m \leq T_j)$, then $\mathcal{N}^{(j)}_i \cap \mathcal{R} \subseteq \bigcup^{m-1}_{r=0} \mathcal{L}^{(j)}_r$.
\end{itemize}

Then, we call $\mathcal{G}_j$ a \textit{Mode Estimation Directed Acyclic Graph (MEDAG)} for $\lambda_j \in \Omega_{U}(\mathbf{A})$.
\label{defn:MEDAG}
\end{definition}

We say a regular node $i\in\mathcal{L}^{(j)}_m$ ``belongs to level $m$", where the levels are indicative of the relative distances of the regular nodes from the source set $\mathcal{S}_j$. The first property indicates that every regular node $i \in \mathcal{V}\setminus{\mathcal{S}_j}$ has at least $(2f+1)$ neighbors in the subgraph $\mathcal{G}_j$, while the second property indicates that all its regular neighbors in such a subgraph belong to levels strictly preceding its own level. In essence, the edges of the MEDAG $\mathcal{G}_j$ represent a medium for transmitting information securely from the source nodes $\mathcal{S}_j$ to the non-source nodes, by preventing the adversaries from forming a bottleneck between such nodes. Intuitively, this requires redundant nodes and edges, and such a requirement is met by the first property of the MEDAG.\footnote{In particular, as regards measurement redundancy, note that for each $\lambda_j \in \Omega_{U}(\mathbf{A})$, a MEDAG $\mathcal{G}_j$ contains atleast $(2f+1)$ source nodes that can detect $\lambda_j$.} Our estimation scheme (described later) relies on a special information flow pattern that requires a node $i$ to listen to \textit{only} its neighbors in $\mathcal{N}^{(j)}_i$ for estimating $z^{(j)}[k]$. The second property of a MEDAG then indicates that nodes in level $m$ only use the estimates of regular nodes in levels $0$ to $m-1$ for recovering $z^{(j)}[k]$. The implications of such properties will become apparent during the analysis of our proposed algorithms.

Before proceeding further, we need to understand the properties of the baseline communication graph $\mathcal{G}$ that guarantee the existence of a MEDAG $\mathcal{G}_j, \forall \lambda_j \in \Omega_{U}(\mathbf{A})$. To this end, we require the following definitions and result from \cite{mitraCDC}.

\begin{definition}($r$-\textbf{reachable set}) For a graph $\mathcal{G}=(\mathcal{V,E})$, a set $\mathcal{S} \subset \mathcal{V}$,  and an integer $r \in \mathbb{Z}_{>0}$, $\mathcal{S}$ is an \textit{$r$-reachable set} if there exists an $i \in \mathcal{S}$ such that $|\mathcal{N}_i \setminus \mathcal{S}| \geq r$. 
\end{definition}

\begin{definition}(\textbf{strongly} $r$-\textbf{robust graph} \textit{w.r.t.} $\mathcal{S}_j$) For $r \in \mathbb{Z}_{>0}$ and $\lambda_j \in \Omega_{U}(\mathbf{A})$, a graph $\mathcal{G}=(\mathcal{V,E})$ is \textit{strongly $r$-robust w.r.t. to the set of source nodes $\mathcal{S}_j$}, if for any non-empty subset $\mathcal{C} \subseteq \mathcal{V}\setminus\mathcal{S}_j$, $\mathcal{C}$ is $r$-reachable.
\label{def:strongrobust}
\end{definition}

\begin{lemma} If $\mathcal{G}$ is strongly $(2f+1)$-robust w.r.t. $\mathcal{S}_j$, for some $\lambda_j \in \Omega_{U}(\mathbf{A})$, then $\mathcal{G}$ contains a MEDAG $\mathcal{G}_j$ for $\lambda_j$.
\label{lemma:graph}
\end{lemma}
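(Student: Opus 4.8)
The plan is to construct the MEDAG $\mathcal{G}_j$ explicitly by a greedy ``peeling'' procedure driven by the strong-robustness hypothesis, and then to exhibit a valid level partition by reading off a longest-path ordering on the resulting acyclic subgraph.

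First I would build an ordering of the non-source nodes. Initialize $\mathcal{C}_1 = \mathcal{V}\setminus\mathcal{S}_j$. Since $\mathcal{G}$ is strongly $(2f+1)$-robust w.r.t. $\mathcal{S}_j$, the nonempty set $\mathcal{C}_1$ is $(2f+1)$-reachable, so there is a node $i_1 \in \mathcal{C}_1$ with at least $2f+1$ neighbors in $\mathcal{V}\setminus\mathcal{C}_1 = \mathcal{S}_j$. Remove it, set $\mathcal{C}_2 = \mathcal{C}_1\setminus\{i_1\}$, and repeat: at step $t$, while $\mathcal{C}_t \neq \emptyset$, strong robustness yields an $i_t \in \mathcal{C}_t$ with at least $2f+1$ neighbors in $\mathcal{V}\setminus\mathcal{C}_t = \mathcal{S}_j \cup \{i_1,\dots,i_{t-1}\}$; set $\mathcal{C}_{t+1} = \mathcal{C}_t\setminus\{i_t\}$. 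Exactly one node is removed per step, so the procedure terminates after $|\mathcal{V}\setminus\mathcal{S}_j|$ steps and produces an ordering $i_1, i_2, \dots$ of all non-source nodes in which each $i_t$ has at least $2f+1$ neighbors strictly preceding it (treating all of $\mathcal{S}_j$ as preceding every $i_t$). I then define $\mathcal{E}_j$ by, for each $i_t$, selecting exactly $2f+1$ such preceding neighbors and adding the corresponding directed edges into $i_t$; source nodes receive no incoming edges. The spanning subgraph $\mathcal{G}_j = (\mathcal{V},\mathcal{E}_j)$ is the candidate MEDAG.

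Property (i) is then immediate: by construction every non-source node, and in particular every node in $\{\mathcal{V}\setminus\mathcal{S}_j\}\cap\mathcal{R}$, has in-degree at least $2f+1$ in $\mathcal{G}_j$, so $|\mathcal{N}^{(j)}_i| \geq 2f+1$. For property (ii), the crucial observation is that every edge of $\mathcal{E}_j$ runs from an earlier node to a later node in the ordering (with sources first), so $\mathcal{G}_j$ is a directed acyclic graph. I assign each node a level equal to the length of the longest directed path from $\mathcal{S}_j$ to it, with $\mathcal{S}_j$ at level $0$; this is well-defined precisely because $\mathcal{G}_j$ is acyclic. A standard longest-path argument shows that if a node $i$ sits at level $m$, then every in-neighbor of $i$ in $\mathcal{G}_j$ lies at some level $<m$. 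Restricting this level assignment to the regular nodes, i.e. setting $\mathcal{L}^{(j)}_m$ to be the level-$m$ nodes intersected with $\mathcal{R}$, gives $\mathcal{L}^{(j)}_0 = \mathcal{S}_j\cap\mathcal{R}$ and guarantees $\mathcal{N}^{(j)}_i \cap \mathcal{R} \subseteq \bigcup_{r=0}^{m-1}\mathcal{L}^{(j)}_r$ for every regular $i \in \mathcal{L}^{(j)}_m$; discarding any empty levels and relabeling yields the partition required by Definition \ref{defn:MEDAG}.

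The one point that requires care, and which I view as the main obstacle, is getting the direction of the peeling right so that the selected edges form a DAG pointing from the sources outward, and then confirming that the partition demanded in (ii), which constrains only the \emph{regular} in-neighbors and must have exactly $\mathcal{S}_j\cap\mathcal{R}$ as its zeroth set, is correctly obtained by restricting the purely structural longest-path levels to $\mathcal{R}$. Note that the construction never references the adversarial set $\mathcal{A}$, consistent with the regular nodes not knowing $\mathcal{A}$: strong robustness is a property of $\mathcal{G}$ and $\mathcal{S}_j$ alone, so the same $\mathcal{G}_j$ works for any admissible adversary set, with only the existence of the level partition in (ii) being verified a posteriori.
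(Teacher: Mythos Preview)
The paper does not actually prove this lemma here; it is imported from the authors' prior work \cite{mitraCDC} and stated without proof. Your argument is correct and is precisely the natural construction: the greedy peeling driven by strong $(2f+1)$-robustness is the standard way to extract the acyclic structure, and your verification of properties (i) and (ii) is sound.

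One small simplification you might consider: the longest-path levelling is more machinery than you need. Since the peeling already produces an ordering $i_1,i_2,\dots$ in which every in-neighbor of $i_t$ in $\mathcal{G}_j$ lies in $\mathcal{S}_j\cup\{i_1,\dots,i_{t-1}\}$, you can take $\mathcal{L}^{(j)}_0=\mathcal{S}_j\cap\mathcal{R}$ and, for $t\geq 1$, $\mathcal{L}^{(j)}_t=\{i_t\}\cap\mathcal{R}$, then discard empty levels and relabel. Property (ii) is then immediate from the ordering itself, with no need to reason about longest paths.
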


Given a $\lambda_j \in \Omega_{U}(\mathbf{A})$, there might be more than one subgraph that satisfies the definition of a MEDAG $\mathcal{G}_j$. In \cite{mitraCDC}, we proposed a distributed algorithm that allowed each node $i$ to identify the sets $\mathcal{N}^{(j)}_i, \forall \lambda_j \in \mathcal{UO}_i$, by explicitly constructing a \textit{specific} MEDAG $\mathcal{G}_j$ for each $\lambda_j \in \mathcal{UO}_i$. In this paper, we assume that these MEDAGs have already been constructed during a distributed design phase using such an algorithm, to inform each node $i$ of the set $\mathcal{N}^{(j)}_i, \forall \lambda_j \in \mathcal{UO}_i$. It will be important to keep in mind that the sets $\mathcal{N}^{(j)}_i$ are time-invariant as they correspond to specific MEDAGs.

\section{Secure State Estimation over Time-varying Networks}
\label{sec:SWLFSE}
In this section, we develop an algorithm that enables each regular node to estimate its locally undetectable portion subject to arbitrary adversarial attacks \textit{and} intermittent communication losses that satisfy the following criterion.

\begin{assumption} There exists $T \in \mathbb{Z}_{>0}$ such that $\forall k \geq T, \bigcup^{T}_{\tau=0}\mathcal{G}[k-\tau]$ contains the MEDAG $\mathcal{G}_j$ for each $\lambda_j \in \Omega_{U}(\mathbf{A})$.
\end{assumption}

Note that under the above communication failure model, $\mathcal{G}[k]$ may not contain the specific MEDAGs constructed during the design phase for some (or all) $k$, thereby precluding direct use of the technique developed in \cite{mitraCDC}. However, such MEDAGs will be preserved in the union graph over the interval $[k-T,k], \forall k\geq T$. For this model, we assume that all estimates being transmitted by regular nodes are properly time-stamped, and propose the following algorithm. 

For each $\lambda_j \in \mathcal{UO}_i$, a regular node $i$ updates its estimate of ${z}^{(j)}[k]$ in the following manner. 

\begin{enumerate}
\item At every time-step $k$, node $i$ collects the \textit{most recent} estimate of $z^{(j)}[k]$ received from each node $l \in \mathcal{N}^{(j)}_i$, along with the corresponding time-stamp $\phi_{il}[k] \in \mathbb{Z}_{\geq 0}$. It then evaluates the delay $\tau_{il}[k] =k-\phi_{il}[k]$ and computes the quantity $\bar{z}^{(j)}_{il}[k]  \triangleq {\lambda_j}^{\tau_{il}[k]}\hat{z}^{(j)}_l[k-\tau_{il}[k]]$.\footnote{For notational simplicity, while considering the eigenvalue $\lambda_j$, we drop the superscript `$j$' on the time-stamp $\phi_{il}[k]$ and the delay $\tau_{il}[k]$.} Prior to receiving the first estimate from a node $l \in \mathcal{N}^{(j)}_i$, the value $\bar{z}^{(j)}_{il}[k]$ is maintained at $0$ by node $i$.\footnote{If node $i$ receives an estimate without a time-stamp from some node in $\mathcal{N}^{(j)}_i \cap \mathcal{A}$, it simply assigns a value of $0$ to such an estimate (without loss of generality). Note that based on Assumption 2, node $i$ is guaranteed to receive a time-stamped estimate from every regular node $l$ in $\mathcal{N}^{(j)}_i$ at least once over every interval of the form $[k-T,k], \, \forall k \geq T$, i.e., for each $l \in \mathcal{N}^{(j)}_i \cap \mathcal{R}$, $\bar{z}^{(j)}_{il}[k]$ will necessarily be of the form $ {\lambda_j}^{\tau_{il}[k]}\hat{z}^{(j)}_l[k-\tau_{il}[k]]$, $\, \forall k\geq T$.}

\item The values $\bar{z}^{(j)}_{il}[k]$ are sorted from largest to smallest; subsequently, the largest $f$ and the smallest $f$ of such values are discarded (i.e., $2f$ values are discarded in all) and $\hat{z}^{(j)}_i[k]$ is updated as
\begin{equation}
\resizebox{0.8\hsize}{!}{$
\hat{z}^{(j)}_i[k+1]=\lambda_j\left(\sum_{l\in\mathcal{M}^{(j)}_i[k]}w^{(j)}_{il}[k]\bar{z}^{(j)}_{il}[k]\right),
\label{eq:updaterule}
$}
\end{equation}
where $\mathcal{M}^{(j)}_i[k] \subset \mathcal{N}^{(j)}_i$ represents the set of nodes whose (potentially) delayed estimates are used by node $i$ at time-step $k$ after the removal of the $2f$ aforementioned values. Node $i$ assigns the consensus weight $w^{(j)}_{il}[k]$ to node $l$ at time-step $k$ for estimating the component of the state corresponding to the eigenvalue $\lambda_j$. The weights $w^{(j)}_{il}[k]$ are non-negative and satisfy $\sum_{l\in\mathcal{M}^{(j)}_i[k]}w^{(j)}_{il}[k]=1, \forall \lambda_j \in \mathcal{UO}_i$.
\end{enumerate}

We refer to the above algorithm as the Sliding Window Local-Filtering based Secure Estimation (SW-LFSE) algorithm. We comment on certain features of this algorithm and then proceed to analyze its convergence properties.

\begin{remark} Like the LFSE algorithm in \cite{mitraCDC}, the SW-LFSE algorithm also relies on a two-stage filtering strategy. Specifically, the first stage of filtering corresponds to a regular node $i \in \mathcal{V}\setminus\mathcal{S}_j$ listening to only its neighbors $\mathcal{N}^{(j)}_i \subseteq \mathcal{N}_i$ in the MEDAG $\mathcal{G}_j$.\footnote{This operation ensures a uni-directional flow of information from the source nodes $\mathcal{S}_j$ (some of whom might also be adversarial) to the rest of the network.} The second stage of filtering requires node $i$ to discard certain extreme values received from nodes in $\mathcal{N}^{(j)}_i$.\footnote{Whereas the first stage of filtering is specific to our distributed state estimation approach, the second stage of filtering is similar to the W-MSR algorithm employed in the secure consensus literature \cite{vaidyacons},\cite{rescons}.} A key point of difference between the algorithms is that in the SW-LFSE approach, at each time-step $k$, node $i$ needs to store the most recent (potentially) delayed  estimate received from each neighbor in $\mathcal{N}^{(j)}_i$. Consequently, we require the regular nodes to possess adequate memory.
\end{remark}

\begin{remark}
Our approach does not require the nodes to have a priori knowledge of the value of $T$ in Assumption 2.
\end{remark}

\begin{remark} Our results will continue to hold if in step $2$ of the SW-LFSE algorithm, node $i$ simply uses the median value of $\bar{z}^{(j)}_{il}[k], l \in \mathcal{N}^{(j)}_i$, in the update rule \eqref{eq:updaterule}. Although this can reduce computation, the present approach offers a degree of freedom in choosing the weights $w^{(j)}_{il}[k]$, that can be potentially leveraged to account for issues like noise.
\end{remark}

\begin{remark} As alluded to earlier in the introduction, this communication-loss model offers the adversaries the additional opportunity of sending false information regarding the time-stamps of their estimates.\footnote{In other words, due to false time-stamp information, the quantity $\hat{z}^{(j)}_l[k-\tau_{il}[k]]$ may not represent the true estimate of an adversarial node $l$ at time $(k-\tau_{il}[k])$. Thus, we resort to a slight abuse of notation here.} Nevertheless, as we establish in the next section, our proposed algorithm is immune to such misbehavior.
\label{rem:opp1}
\end{remark}
\section{Analysis of the SW-LFSE Algorithm}
The following is the main result of this section.

\begin{theorem} Given an LTI system \eqref{eqn:plant} and a measurement model \eqref{eqn:Obsmodel}, let the baseline communication graph $\mathcal{G}$ be strongly $(2f+1)$-robust w.r.t. $\mathcal{S}_j, \forall \lambda_j\in\Omega_{U}(\mathbf{A})$. Furthermore, let Assumptions 1 and 2 be met. Then, the proposed SW-LFSE algorithm guarantees the following despite the actions of any set of $f$-local Byzantine adversaries.

\begin{itemize} 
\item (\textbf{Asymptotic Stability}) Each regular node $i\in\mathcal{R}$ can asymptotically estimate the state of the plant, i.e., $\lim_{k\to\infty} ||\hat{\mathbf{x}}_i[k]-\mathbf{x}[k]||=0, \forall i \in \mathcal{R}$, where $\hat{\mathbf{x}}_i[k]$ is the estimate of the state $\mathbf{x}[k]$ maintained by node $i$.
\item (\textbf{Rate of Convergence}) Let $e^{(j)}_i[k]=\hat{z}^{(j)}_i[k]-z^{(j)}[k]$ denote the error in estimation of the component $z^{(j)}[k]$ by a regular node $i \in \mathcal{V}\setminus\mathcal{S}_j$. Then, if node $i$ belongs to level $q$ of the MEDAG $\mathcal{G}_j$, its estimation error $e^{(j)}_i[k]$ satisfies the following inequality $\forall k \geq (T+1)q$:

\begin{equation}
\resizebox{0.8\hsize}{!}{$
|e^{(j)}_i[k]| \leq \beta^{(j)}{\left[(N-(2f+1)){\left(\frac{|\lambda_j|}{\gamma^{(j)}}\right)}^{(T+1)}\right]}^{q}{(\gamma^{(j)})}^k,$}
\label{eq:conv_rate}
\end{equation}
where $\beta^{(j)} > 0$ and $\gamma^{(j)} \in (0,1)$ are certain constants. 
\end{itemize}
\label{theo:main}
\end{theorem}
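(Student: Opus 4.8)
The plan is to prove the rate bound \eqref{eq:conv_rate} first, by induction on the MEDAG level $q$, and then deduce asymptotic stability from it together with Lemma \ref{lemma:luen}. Fix a mode $\lambda_j\in\Omega_{U}(\mathbf{A})$, and note $|\lambda_j|\geq 1$ since $\Omega_{U}(\mathbf{A})\subseteq\Lambda_{U}(\mathbf{A})$. The first step is to derive a clean scalar error recursion. I introduce the extrapolated error $\bar{e}^{(j)}_{il}[k]\triangleq\bar{z}^{(j)}_{il}[k]-z^{(j)}[k]$. For a regular neighbor $r$, substituting $\bar{z}^{(j)}_{ir}[k]={\lambda_j}^{\tau_{ir}[k]}\hat{z}^{(j)}_r[k-\tau_{ir}[k]]$ together with the plant identity $z^{(j)}[k]={\lambda_j}^{\tau_{ir}[k]}z^{(j)}[k-\tau_{ir}[k]]$ yields the key relation $\bar{e}^{(j)}_{ir}[k]={\lambda_j}^{\tau_{ir}[k]}e^{(j)}_r[k-\tau_{ir}[k]]$. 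This shows the extrapolation is precisely the operation that carries a neighbor's stale error forward to the current time, converting the delay into a multiplicative factor ${\lambda_j}^{\tau}$. Subtracting $z^{(j)}[k+1]=\lambda_j\sum_{l}w^{(j)}_{il}[k]z^{(j)}[k]$ (valid since the weights sum to one) from \eqref{eq:updaterule} then gives $e^{(j)}_i[k+1]=\lambda_j\sum_{l\in\mathcal{M}^{(j)}_i[k]}w^{(j)}_{il}[k]\bar{e}^{(j)}_{il}[k]$.

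The second step is the filtering (safety) argument. Using property (i) of Definition \ref{defn:MEDAG} ($|\mathcal{N}^{(j)}_i|\geq 2f+1$) and the $f$-local property (at most $f$ adversaries in $\mathcal{N}^{(j)}_i$), I run the standard sorting argument of the W-MSR literature: every retained value $\bar{z}^{(j)}_{il}[k]$, $l\in\mathcal{M}^{(j)}_i[k]$ — even one from an adversary that reports a false time-stamp (cf. Remark \ref{rem:opp1}) — is sandwiched between the smallest and largest regular values $\{\bar{z}^{(j)}_{ir}[k]:r\in\mathcal{N}^{(j)}_i\cap\mathcal{R}\}$, because at least one of the discarded top-$f$ (resp. bottom-$f$) values is regular and dominates (resp. is dominated by) it. Hence $|\bar{e}^{(j)}_{il}[k]|\leq\max_{r\in\mathcal{N}^{(j)}_i\cap\mathcal{R}}|\bar{e}^{(j)}_{ir}[k]|$ for every retained $l$. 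Combining this with the recursion, bounding $w^{(j)}_{il}[k]\leq 1$ and $|\mathcal{M}^{(j)}_i[k]|=|\mathcal{N}^{(j)}_i|-2f\leq N-(2f+1)$, and invoking the identity from step one, I obtain $|e^{(j)}_i[k]|\leq(N-(2f+1))\max_{r\in\mathcal{N}^{(j)}_i\cap\mathcal{R}}|\lambda_j|^{\sigma_r}|e^{(j)}_r[k-\sigma_r]|$, where $\sigma_r\triangleq 1+\tau_{ir}[k-1]$ and, by Assumption 2, $1\leq\sigma_r\leq T+1$ once $k\geq T+1$.

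The third step is the induction on levels; write $\Gamma^{(j)}\triangleq(N-(2f+1))(|\lambda_j|/\gamma^{(j)})^{T+1}$. For the base case $q=0$ (nodes in $\mathcal{S}_j\cap\mathcal{R}$), I invoke Lemma \ref{lemma:luen}: a source node detects $\lambda_j$, so choosing the observer gain to place the error eigenvalue at some $\gamma^{(j)}\in(0,1)$ gives $|e^{(j)}_i[k]|\leq\beta^{(j)}(\gamma^{(j)})^k$, which is \eqref{eq:conv_rate} at $q=0$. For the inductive step at level $q\geq 1$, property (ii) of Definition \ref{defn:MEDAG} guarantees every regular neighbor $r$ lies in a level $p_r\leq q-1$; applying the hypothesis at time $k-\sigma_r$ is legitimate because $k\geq(T+1)q$ and $\sigma_r\leq T+1$ force $k-\sigma_r\geq(T+1)(q-1)\geq(T+1)p_r$. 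Substituting the hypothesis, factoring out $(\gamma^{(j)})^k$, and using $|\lambda_j|/\gamma^{(j)}>1$ with $\sigma_r\leq T+1$ and $p_r\leq q-1$ collapses the maximum and reproduces \eqref{eq:conv_rate} exactly, the per-level growth factor $(N-(2f+1))(|\lambda_j|/\gamma^{(j)})^{T+1}$ being assembled precisely from the at most $N-(2f+1)$ retained terms and the extrapolation over a window of length at most $T+1$.

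Asymptotic stability then follows by assembling the modes: for $\lambda_j\in\Omega_{U}(\mathbf{A})$ the right-hand side of \eqref{eq:conv_rate} vanishes as $k\to\infty$ since $\gamma^{(j)}\in(0,1)$ and $(\Gamma^{(j)})^{q}$ is a fixed constant for each node; for $\lambda_j\in\mathcal{O}_i$ Lemma \ref{lemma:luen} applies directly; and the remaining stable, locally-undetectable components satisfy $z^{(j)}[k]\to 0$ and are estimated to zero. Thus each regular node recovers $\mathbf{z}[k]$ component-wise, and since $\mathbf{x}=\mathbf{V}^{-1}\mathbf{z}$, also $\lim_{k\to\infty}\|\hat{\mathbf{x}}_i[k]-\mathbf{x}[k]\|=0$ for all $i\in\mathcal{R}$. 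I expect the main obstacle to be the bookkeeping in the third step: reconciling the time-varying, possibly adversarially mis-reported, delays with the level-based induction, and in particular checking that the offset $(T+1)q$ is exactly what guarantees the hypothesis applies to every delayed neighbor estimate appearing in the recursion.
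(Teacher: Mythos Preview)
Your proposal is correct and follows essentially the same approach as the paper's proof: both induct on the MEDAG level, use the W-MSR-style sandwich/convex-combination argument to bound retained (possibly adversarial) extrapolated values by regular ones, convert the delay into the multiplicative factor $|\lambda_j|^{T+1}$, bound $|\mathcal{M}^{(j)}_i[k]|\leq N-(2f+1)$, and propagate the source-level exponential bound through the levels. Your presentation is slightly more streamlined (you introduce the extrapolated-error notation $\bar{e}^{(j)}_{il}[k]$ explicitly and carry out the general inductive step, whereas the paper treats $q=1$ in detail and sketches the rest), but the substance is identical.
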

\begin{proof}
Note that for each regular node $i$, the state vector $\mathbf{z}[k]$ can be partitioned into the components $\mathbf{z}_{\mathcal{O}_i}[k]$ and $\mathbf{z}_{\mathcal{UO}_i}[k]$ that correspond to the detectable and undetectable eigenvalues, respectively, of node $i$. Based on Lemma \ref{lemma:luen}, we know that node $i$ can estimate $\mathbf{z}_{\mathcal{O}_i}[k]$ asymptotically via a locally constructed Luenberger observer. It remains to show that node $i$ can recover $\mathbf{z}_{\mathcal{UO}_i}[k]$, or in other words, for each $\lambda_j \in \mathcal{UO}_{i}$, we need to prove that $\lim_{k\to\infty}|\hat{{z}}^{(j)}_i[k]-{z}^{(j)}[k]|=0$. Equivalently, we show that for each $\lambda_j \in \Omega_{U}(\mathbf{A})$, every regular node $i \in \mathcal{V}\setminus{\mathcal{S}_j}$ can asymptotically recover ${z}^{(j)}[k]$.

Consider an eigenvalue $\lambda_j \in \Omega_{U}(\mathbf{A})$. Since  $\mathcal{E}[k]\subseteq\mathcal{E}$ for all $k$, Assumption $2$ can hold only if the baseline graph $\mathcal{G}$ contains $\mathcal{G}_j$. The latter follows from the conditions of the Theorem and Lemma \ref{lemma:graph}. Next, based on Assumption 2, notice that for all $k \geq T$,  the union of the graphs over the interval $[k-T,k]$ contains the MEDAG $\mathcal{G}_j$. Recall that the sets $\{\mathcal{L}^{(j)}_{0}, \mathcal{L}^{(j)}_{1}, \cdots, \mathcal{L}^{(j)}_{q}, \cdots \mathcal{L}^{(j)}_{T_j}\}$ form a partition of the set of regular nodes $\mathcal{R}$ in such a MEDAG. We prove the desired result by inducting on the level number $q$. For $q=0$, $\mathcal{L}^{(j)}_{0}=\mathcal{S}_j \cap \mathcal{R}$ from definition, and hence all nodes in level $0$ can estimate ${z}^{(j)}[k]$ asymptotically by virtue of Lemma \ref{lemma:luen}. Next, consider a regular node $i$ in $\mathcal{L}^{(j)}_{1}$ and let $e^{(j)}_i[k] \triangleq \hat{z}^{(j)}_i[k]-z^{(j)}[k]$. We first analyze the SW-LFSE update rule \eqref{eq:updaterule}. To this end, at each time-step $k$, let the neighbor set $\mathcal{N}^{(j)}_i$ of node $i$ be partitioned into the sets $\mathcal{U}^{(j)}_i[k], \mathcal{M}^{(j)}_i[k]$ and $\mathcal{J}^{(j)}_i[k]$, where $\mathcal{U}^{(j)}_i[k]$ and $\mathcal{J}^{(j)}_i[k]$ contain $f$ nodes each, with the highest and the lowest values of $\bar{z}^{(j)}_{il}[k]$ respectively, and $\mathcal{M}^{(j)}_i[k]$ contains the remaining nodes in $\mathcal{N}^{(j)}_i$. At any instant $k$, we can either have $\mathcal{M}^{(j)}_i[k] \cap \mathcal{A}=0$ or $\mathcal{M}^{(j)}_i[k] \cap \mathcal{A} \neq 0$. In the former case, all nodes in $\mathcal{M}^{(j)}_i[k]$ belong to $\mathcal{L}^{(j)}_0=\mathcal{S}_j\cap\mathcal{R}$. In the latter case, when node $i$ uses values transmitted by adversarial nodes in its update rule, it follows from the SW-LFSE algorithm, the $f$-locality of the adversary model, and the fact that $|\mathcal{N}^{(j)}_i| \geq (2f+1)$, that for each $l\in\mathcal{M}^{(j)}_i[k]\cap\mathcal{A}$, there exists a node $u\in\mathcal{U}^{(j)}_i[k]$ and a node $v\in\mathcal{J}^{(j)}_i[k]$ such that both $u,v \in \mathcal{L}^{(j)}_0$, and $\bar{z}^{(j)}_{iv}[k]\leq \bar{z}^{(j)}_{il}[k] \leq \bar{z}^{(j)}_{iu}[k]$, i.e., $\bar{z}^{(j)}_{il}[k]$ can be expressed as a convex combination of $\bar{z}^{(j)}_{iu}[k]$ and $\bar{z}^{(j)}_{iv}[k]$.\footnote{Explicit dependence of $u,v$ on the parameters represented by $i,j,l$ and $k$ is not shown to avoid cluttering of the exposition.} Based on the above discussion and \eqref{eq:updaterule}, it follows that for all $k$, $\hat{z}^{(j)}_i[k+1]$ belongs to the convex hull formed by $\lambda_j \bar{z}^{(j)}_{il}[k]$, $l\in \mathcal{L}^{(j)}_0$. Specifically, there exist weights $\bar{w}^{(j)}_{il}[k]$ such that $\sum_{l\in \mathcal{N}^{(j)}_i \cap \mathcal{L}^{(j)}_0}\bar{w}^{(j)}_{il}[k]=1$, and 
\begin{equation}
\resizebox{0.75\hsize}{!}{$
\hat{z}^{(j)}_i[k+1]= \lambda_j\left(\sum_{l\in \mathcal{N}^{(j)}_i \cap \mathcal{L}^{(j)}_0 }\bar{w}^{(j)}_{il}[k]\bar{z}^{(j)}_{il}[k]\right).
\label{eq:simp_dyn1}
$}
\end{equation}
Since $\sum_{l\in \mathcal{N}^{(j)}_i \cap \mathcal{L}^{(j)}_0}\bar{w}^{(j)}_{il}[k]=1$, and $z^{(j)}[k+1]=\lambda_jz^{(j)}[k]$ based on \eqref{eqn:plant_tr}, we obtain 
\begin{equation}
\resizebox{0.95\hsize}{!}{$
z^{(j)}[k+1]=\lambda_j\left(\sum_{l\in \mathcal{N}^{(j)}_i \cap \mathcal{L}^{(j)}_0}\bar{w}^{(j)}_{il}[k]{\lambda_j}^{\tau_{il}[k]}z^{(j)}[k-\tau_{il}[k]]\right).
\label{eq:true_dyn}
$}
\end{equation}
Based on Assumption 2 and step 1 of the SW-LFSE update rule, we have that for all $k \geq T$, $\bar{z}^{(j)}_{il}[k]={\lambda_j}^{\tau_{il}[k]}\hat{z}^{(j)}_{l}[k-\tau_{il}[k]]$, $l\in\mathcal{N}^{(j)}_i \cap \mathcal{L}^{(j)}_0$. Subtracting \eqref{eq:true_dyn} from \eqref{eq:simp_dyn1}, we then obtain the following error dynamics for all $k \geq T$:
\begin{equation}
\resizebox{0.85\hsize}{!}{$
e^{(j)}_i[k+1]= \lambda_j\left(\sum_{l\in \mathcal{N}^{(j)}_i \cap \mathcal{L}^{(j)}_0}\bar{w}^{(j)}_{il}[k]{\lambda_j}^{\tau_{il}[k]}e^{(j)}_l[k-\tau_{il}[k]]\right).
\label{eq:error_dyn1}
$}
\end{equation} 
Noting that the weights $\bar{w}^{(j)}_{il}[k]$ belong to $[0,1]$, the delay terms $\tau^{}_{il}[k]$ are upper bounded by $T$ for $l\in\mathcal{N}^{(j)}_i\cap\mathcal{R}$, $\lambda_j$ satisfies $|\lambda_j| \geq 1$, and using the triangle inequality, we obtain the following based on \eqref{eq:error_dyn1} for all $k \geq T$:
\begin{equation}
\resizebox{0.85\hsize}{!}{$
|e^{(j)}_i[k+1]| \leq |\lambda_j|^{(T+1)}\hspace{0.75mm}\left(\sum_{l\in \mathcal{N}^{(j)}_i \cap \mathcal{L}^{(j)}_0 }\hspace{.5mm}|e^{(j)}_l[k-\tau_{il}[k]]|\right).
\label{eq:err_dyn2}
$}
\end{equation}
For every $l \in \mathcal{L}^{(j)}_0$, since $e^{(j)}_{l}[k]$ converges asymptotically, and hence exponentially (the local Luenberger observer based dynamics are linear)  based on Lemma \ref{lemma:luen}, there exist constants $c^{(j)}_l > 0$ and $\gamma^{(j)}_l \in (0,1)$ such that $|e^{(j)}_l[k]| \leq c^{(j)}_l{(\gamma^{(j)}_l)}^k$. Let $\beta^{(j)} \triangleq \max_{l\in\mathcal{L}^{(j)}_0} c^{(j)}_l$ and ${\gamma}^{(j)} \triangleq \max_{l\in\mathcal{L}^{(j)}_0} {\gamma}^{(j)}_l$. Then, we obtain the following inequality based on \eqref{eq:err_dyn2} for all $k\geq T$:
\begin{equation}
|e^{(j)}_i[k+1]| \leq |\lambda_j|^{(T+1)}|\mathcal{M}^{(j)}_i[k]|\beta^{(j)}{(\gamma^{(j)})}^{(k-T)}.
\label{eq:err_dyn3}
\end{equation}
Finally, note that based on the rules of the SW-LFSE algorithm, at each time-step $k \geq T$, node $i$ discards $2f$ values and does not use its own estimate in the update rule \eqref{eq:updaterule}. Hence, we have $|\mathcal{M}^{(j)}_i[k]| \leq (N-(2f+1))$. Thus, we obtain \eqref{eq:conv_rate} for $q=1$, implying exponential stability of the error dynamics \eqref{eq:error_dyn1} for all nodes in level $1$, since $\gamma^{(j)} \in (0,1)$. 

Suppose asymptotic stability holds for nodes in all levels from $0$ to $q$ (where $ 1 \leq q \leq T_j-1 $). It is easy to see that the result holds for all nodes in $\mathcal{L}^{(j)}_{q+1}$ as well, by noting that (i) a regular node $i \in \mathcal{L}^{(j)}_{q+1}$ has $\mathcal{N}^{(j)}_i \cap \mathcal{R} \subseteq \bigcup^{q}_{r=0}\mathcal{L}^{(j)}_r$, and (ii) any value $\bar{z}^{(j)}_{il}[k]$ used by node $i$ in the update rule \eqref{eq:updaterule} lies in the convex hull formed by $\bar{z}^{(j)}_{iu}[k], u\in \bigcup^{q}_{r=0}\mathcal{L}^{(j)}_r$. Based on the induction hypothesis, asymptotic stability can then be argued using the same reasoning as the $q=1$ case. Verifying \eqref{eq:conv_rate} is a matter of straightforward algebra.
\end{proof}

The above arguments can be used almost identically to analyze the impact of bounded communication delays (potentially random, time-varying) in the presence of adversaries, for static networks.\footnote{Here, by a bounded communication delay we imply that if $(i,j)\in \mathcal{E}$ and $i,j \in \mathcal{R}$, then any estimate transmitted by node $i$ to node $j$ at time-step $k$, is received by node $j$ no later than time-step $k+T$, for some $T\in\mathbb{Z}_{> 0}$.} We formalize this observation below.

\begin{corollary}
Given an LTI system \eqref{eqn:plant} satisfying Assumption 1, and a measurement model \eqref{eqn:Obsmodel}, let $\mathcal{G}[k]=\mathcal{G} \hspace{2mm} \forall k$, where $\mathcal{G}$ is strongly $(2f+1)$-robust w.r.t. $\mathcal{S}_j, \forall \lambda_j\in\Omega_{U}(\mathbf{A})$. Furthermore, let communication delays between any pair of regular nodes in $\mathcal{G}$ be bounded by some $T \in \mathbb{Z}_{>0}$. Then, the proposed SW-LFSE algorithm provides identical guarantees as in Theorem \ref{theo:main}.
\label{corr1}
\end{corollary}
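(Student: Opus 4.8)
The plan is to show that a static network with delays bounded by $T$ presents, at each regular node, exactly the same time-stamped data structure that the intermittent-loss model of Theorem \ref{theo:main} produces, so that the inductive argument there carries over essentially verbatim. First I would invoke the structural hypotheses: since $\mathcal{G}$ is strongly $(2f+1)$-robust w.r.t.\ $\mathcal{S}_j$ for every $\lambda_j \in \Omega_{U}(\mathbf{A})$, Lemma \ref{lemma:graph} guarantees that $\mathcal{G}$ contains a MEDAG $\mathcal{G}_j$ for each such eigenvalue; because the graph is static, this MEDAG, and hence each node's neighbor set $\mathcal{N}^{(j)}_i$ and the level partition $\{\mathcal{L}^{(j)}_0, \ldots, \mathcal{L}^{(j)}_{T_j}\}$, is fixed for all time. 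As in Theorem \ref{theo:main}, it then suffices to prove by induction on the level number $q$ that every regular node $i \in \mathcal{V}\setminus\mathcal{S}_j$ asymptotically recovers $z^{(j)}[k]$, since the locally detectable modes are covered by Lemma \ref{lemma:luen}.

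The key translation step is to establish that, for each regular neighbor $l \in \mathcal{N}^{(j)}_i \cap \mathcal{R}$, the delay satisfies $\tau_{il}[k] \leq T$ for all $k \geq T$. Indeed, since $l$ transmits a time-stamped estimate at every step and each transmission is delivered within $T$ steps, any estimate sent at a time $t \leq k-T$ has reached node $i$ by time $k$; hence the most recent estimate that $i$ holds from $l$ was transmitted no earlier than $k-T$, so $\bar z^{(j)}_{il}[k] = \lambda_j^{\tau_{il}[k]}\hat z^{(j)}_l[k-\tau_{il}[k]]$ with $\tau_{il}[k]\le T$. This is precisely the property that Assumption 2 supplied in the proof of Theorem \ref{theo:main}, and the bound $\tau_{il}[k]\le T$ is exactly what was used in passing from \eqref{eq:error_dyn1} to \eqref{eq:err_dyn2}.

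With this correspondence in hand, I would reuse the two-stage filtering argument unchanged. The first stage (node $i$ listening only to $\mathcal{N}^{(j)}_i$) is unaffected by delays, and the second stage (discarding the $f$ largest and $f$ smallest values $\bar z^{(j)}_{il}[k]$), combined with $f$-locality and $|\mathcal{N}^{(j)}_i|\geq 2f+1$, again forces each surviving adversarial value to lie in the convex hull of two values reported by regular source nodes. Adversaries may still falsify their time-stamps, but this merely changes which bounded quantity an adversary contributes, and the filter neutralizes it as before; thus the error recursion \eqref{eq:error_dyn1}, the inequality \eqref{eq:err_dyn3}, and the cardinality bound $|\mathcal{M}^{(j)}_i[k]| \leq N-(2f+1)$ all survive intact, yielding both asymptotic stability and the rate estimate \eqref{eq:conv_rate}.

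The main obstacle, such as it is, is not a new calculation but the careful justification of the translation step, namely that a delay bound of $T$ on a static graph guarantees $\tau_{il}[k]\le T$ for every regular neighbor over every window $[k-T,k]$; this implicitly requires the mild assumption that regular nodes transmit at every time-step. Once this is verified, the static bounded-delay model is exhibited as a special instance of the hypotheses exploited in Theorem \ref{theo:main}, and the remainder reduces to the same induction, so no further work is needed.
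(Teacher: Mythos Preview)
Your proposal is correct and follows the same approach as the paper, which does not give a separate proof but simply remarks that the arguments of Theorem~\ref{theo:main} carry over almost identically to the bounded-delay setting. Your write-up spells out the one translation step the paper leaves implicit---that a delay bound of $T$ on a static graph ensures $\tau_{il}[k]\le T$ for every regular neighbor $l\in\mathcal{N}^{(j)}_i\cap\mathcal{R}$ once $k\ge T$---and then correctly observes that the induction, the convex-hull containment of adversarial values, and the rate bound \eqref{eq:conv_rate} all go through unchanged.
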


We conclude this section with the following remark.
\begin{remark} For each unstable/ marginally stable eigenvalue of the system, Theorem \ref{theo:main} and Corollary \ref{corr1} explicitly relate the rates of convergence of the different non-source nodes to the the instability of the system mode under consideration, the delay upper bound $T$, and the respective distances of such nodes (captured by the different levels `$q$' of a MEDAG) from the corresponding sources of information. 
\end{remark}

\section{Secure State Estimation over Analog Erasure Channels}
\label{sec:analog}
In this section, we explore the problem of secure distributed state estimation over a network where each communication link is modeled by an analog erasure channel as defined in \cite{elia}. Specifically, the transmission of information across any link $(i,j) \in \mathcal{E}$ is governed by a random process $\xi_{ij}[k]$ that is memoryless, i.e., $\xi_{ij}[k]$ is i.i.d. over time. Furthermore, across space, the packet dropping processes over different links are independent. For any $k$, the random variable $\xi_{ij}[k]$ follows a Bernoulli fading distribution, i.e., $\xi_{ij}[k]=0$ with erasure probability $p$ and $\xi_{ij}[k]=1$ with probability $(1-p)$; the implications of $\xi_{ij}[k]$ assuming the values $0$ and $1$ will be discussed shortly. 

Our \textbf{objective} in this section will be to design a secure distributed state estimation protocol that guarantees mean-square stability of the estimation error dynamics for each regular node, in the following sense.
\begin{definition}(\textbf{Mean-Square Stability (MSS)})
The estimation error dynamics of the regular nodes is said to be mean-square stable if  $\lim_{k\to\infty} E[||\mathbf{e}_i[k]||^2]=0, \forall i \in \mathcal{R}$, where $\mathbf{e}_i[k]=\hat{\mathbf{x}}_i[k]-\mathbf{x}[k]$, and the expectation is taken with respect to the packet dropping processes $\xi_{ij}[k], (i,j) \in \mathcal{E}$.
\label{defn:MSS}
\end{definition}

\subsection{Channels with no delay}
\label{sec:analog_nodelay}
We first consider the case where $\xi_{ij}[k]=1$ implies that any data packet transmitted by node $i$ at time $k$ is received perfectly by node $j$ at time $k$, and when $\xi_{ij}[k]=0$, such a packet is dropped completely. For this model, we propose a simple algorithm described as follows.

For each $\lambda_j \in \mathcal{UO}_i$, a regular node $i$ updates its estimate of ${z}^{(j)}[k]$ in the following manner.

\begin{itemize}
\item At each time-step $k$, if it receives estimates from at least $(2f+1)$ nodes in $\mathcal{N}^{(j)}_i$, it runs the LFSE algorithm, i.e., it removes the largest $f$ and the smallest $f$ estimates $\hat{z}^{(j)}_l[k], l\in \mathcal{N}^{(j)}_i$ and updates $\hat{z}^{(j)}_i[k]$ as
\begin{equation}
\resizebox{0.8\hsize}{!}{$
\hat{z}^{(j)}_i[k+1]=\lambda_j\left(\sum_{l\in\mathcal{M}^{(j)}_i[k]}w^{(j)}_{il}[k] \hat{z}^{(j)}_l[k]\right),
\label{eqn:LFSE}
$}
\end{equation}
where the set $\mathcal{M}^{(j)}_i[k]$ and the weights $w^{(j)}_{il}[k]$ are defined as in the description of the SW-LFSE algorithm in Section \ref{sec:SWLFSE}. Otherwise, it runs open-loop as follows:
\begin{equation}
\hat{z}^{(j)}_i[k+1]=\lambda_j\hat{z}^{(j)}_i[k].
\label{eqn:open_loop}
\end{equation}
\end{itemize}
The above algorithm provides the following guarantees. 
\begin{theorem}
Given an LTI system (1) satisfying Assumption 1, and a measurement model (2), let the baseline communication graph $\mathcal{G}$ be strongly $(mf+1)$-robust w.r.t. $\mathcal{S}_j, \forall \lambda_j\in\Omega_{U}(\mathbf{A})$, where $m\in\mathbb{Z}_{>0}$. For each $(i,j) \in \mathcal{E}$, let $\xi_{ij}[k]$ be a Bernoulli packet dropping process with erasure probability $p$, that is i.i.d. over time and independent of packet dropping processes over other links. Suppose $m \geq 3$ and that the following is true:\footnote{The choice of $m \geq 3$ is justified in Remark \ref{rem:whymgreater3}.}
\begin{equation}
\rho^2\bar{p} < 1,
\label{eqn:conv_crit}
\end{equation}
where $\rho$ is the spectral radius of $\mathbf{A}$, and 
\begin{equation}
\bar{p} \triangleq 1-\sum_{l=(2f+1)}^{(m-1)f+1}\binom{(m-1)f+1}{l}{(1-p)}^{l}p^{(m-1)f+(1-l)}.
\label{eqn:prob}
\end{equation}
Then, the secure distributed state estimation algorithm described by the update rules \eqref{eqn:LFSE} and \eqref{eqn:open_loop} guarantees mean-square stability in the sense of Definition \ref{defn:MSS}, despite the actions of any $f$-local set of Byzantine adversaries
\label{theo:erasure1}
\end{theorem}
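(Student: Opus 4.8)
The plan is to mirror the level-by-level induction used in the proof of Theorem~\ref{theo:main}, but to replace the deterministic error bound at each level by a mean-square recursion driven by the random packet-drop process. As a first reduction, for each regular node $i$ the detectable portion $\mathbf{z}_{\mathcal{O}_i}[k]$ is recovered by a local Luenberger observer (Lemma~\ref{lemma:luen}), whose error decays deterministically and exponentially and hence trivially satisfies mean-square stability; so it suffices to establish, for each $\lambda_j \in \Omega_U(\mathbf{A})$, that every regular non-source node $i$ has $\lim_{k\to\infty} E[|e^{(j)}_i[k]|^2]=0$. I would first record the graph-theoretic consequence of the hypothesis: since $\mathcal{G}$ is strongly $(mf+1)$-robust w.r.t. $\mathcal{S}_j$, the construction underlying Lemma~\ref{lemma:graph}, applied with redundancy $mf+1$, yields a spanning subgraph $\mathcal{G}_j$ with the level structure of a MEDAG in which every regular non-source node $i$ satisfies $|\mathcal{N}^{(j)}_i| \geq mf+1$, all of whose regular neighbors lie in strictly lower levels. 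By the $f$-local property, at least $(m-1)f+1$ of these neighbors are regular; this is the redundancy that the random channel will consume.

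The second step is the deterministic safety property of the filter \eqref{eqn:LFSE}, identical to the argument in Theorem~\ref{theo:main}. Whenever node $i$ actually runs \eqref{eqn:LFSE} (i.e.\ it receives at least $2f+1$ estimates in a given slot), the sort-and-discard operation together with the $f$-locality of the adversaries guarantees that every retained value---adversarial or not---is sandwiched between two regular neighbors' values. Consequently $\hat z^{(j)}_i[k+1]$ lies in the convex hull of $\{\lambda_j \hat z^{(j)}_l[k]\}$ taken over the received regular neighbors $l$, all of which occupy strictly lower levels. Subtracting the true dynamics $z^{(j)}[k+1]=\lambda_j z^{(j)}[k]$ as in \eqref{eq:error_dyn1} then gives $e^{(j)}_i[k+1]=\lambda_j\sum_l \bar w^{(j)}_{il}[k]\,e^{(j)}_l[k]$, a convex combination (scaled by $\lambda_j$) of lower-level errors, so that $|e^{(j)}_i[k+1]|\le |\lambda_j|\max_l |e^{(j)}_l[k]|$ on this ``good'' event.

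Next I would quantify the probability of the complementary ``bad'' event, in which node $i$ falls back to the open-loop update \eqref{eqn:open_loop} and amplifies its own error by $\lambda_j$. The worst case for the network is an adversary that withholds all transmissions, so that the number of estimates $i$ receives is at most the number of its regular neighbors whose packets survive. Hence the bad event is contained in the event that fewer than $2f+1$ of node $i$'s regular-neighbor packets are received. Since these arrivals are independent Bernoulli$(1-p)$ and $i$ has at least $(m-1)f+1$ regular neighbors, monotonicity of the binomial tail in the number of trials bounds this probability by exactly $\bar p$ as defined in \eqref{eqn:prob}; the requirement $m\ge 3$ is precisely what makes the summation range $2f+1\le l\le (m-1)f+1$ nonempty, i.e.\ what makes it possible to collect $2f+1$ regular estimates at all.

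The core of the argument is then a mean-square recursion obtained by combining the two events. Writing $V_i[k]\triangleq E[|e^{(j)}_i[k]|^2]$ and letting $\chi_i[k]$ be the indicator of the bad event, the key observation is that $\chi_i[k]$ depends only on the packet drops into node $i$ during slot $k$ and is therefore independent of the past, hence of $e^{(j)}_i[k]$ and of the lower-level errors $e^{(j)}_l[k]$. Conditioning on the history up to $k-1$ and using the two bounds above yields $V_i[k+1]\le \rho^2\bar p\, V_i[k]+\rho^2\, E[\max_l |e^{(j)}_l[k]|^2]$, where $\rho$ bounds $|\lambda_j|$. I would then induct on the level $q$: at level $0$ the errors decay deterministically (Lemma~\ref{lemma:luen}); assuming $V_l[k]\to 0$ for all lower levels, the forcing term $\rho^2 E[\max_l|e^{(j)}_l[k]|^2]\le \rho^2\sum_l V_l[k]$ tends to zero, and since the homogeneous coefficient satisfies $\rho^2\bar p<1$ by \eqref{eqn:conv_crit}, the stable scalar recursion forces $V_i[k]\to 0$. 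Aggregating over all modes $\lambda_j\in\Omega_U(\mathbf{A})$ and the detectable part establishes MSS of $\mathbf{e}_i[k]$ for every regular node. I expect the main obstacle to be the stochastic bookkeeping in this last step: carefully justifying the independence of $\chi_i[k]$ from the (data-dependent and random) lower-level errors and filter weights so that the cross terms factor, and verifying that worst-case adversarial withholding really reduces the bad-event probability to the binomial tail $\bar p$ rather than to something adversary-controlled. Once the recursion $V_i[k+1]\le \rho^2\bar p\, V_i[k]+(\text{vanishing})$ is in hand, stability is immediate from $\rho^2\bar p<1$.
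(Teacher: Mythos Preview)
Your proposal is correct and follows essentially the same route as the paper: level-by-level induction on the MEDAG, splitting into the open-loop ``bad'' event (probability bounded by the binomial tail $\bar p$ over the $\ge (m-1)f+1$ regular neighbors) and the LFSE ``good'' event (error bounded by a convex combination of lower-level regular errors), yielding the scalar recursion $V_i[k+1]\le \rho^2\bar p\,V_i[k]+(\text{vanishing})$ and concluding via $\rho^2\bar p<1$. The only cosmetic differences are that the paper simplifies by having node $i$ retain a \emph{single} neighbor from $\mathcal{M}^{(j)}_i[k]$ (so the squared-error expansion is exact and the cross term vanishes because $E[\mathcal{I}_i]=E[\mathcal{I}_i^2]$), and handles the induction-step cross terms with Cauchy--Schwarz rather than your $\max^2\le\sum$ bound; your anticipated ``main obstacle'' regarding independence of $\chi_i[k]$ is exactly the point the paper addresses by the same containment argument you describe.
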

\begin{proof}
Note that the packet dropping processes do not affect the estimation of the locally detectable portions of the state, i.e., each regular node $i$ can recover $\mathbf{z}_{\mathcal{O}_i}[k]$ asymptotically based on Lemma \ref{lemma:luen}. Consider $\lambda_j \in \Omega_{U}(\mathbf{A})$. Since $\mathcal{G}$ is strongly $(mf+1)$-robust w.r.t. $\mathcal{S}_j$, a trivial extension of Lemma \ref{lemma:graph} implies that in the MEDAG $\mathcal{G}_j$, $|\mathcal{N}^{(j)}_i| \geq (mf+1), \forall i \in \{\mathcal{V}\setminus{\mathcal{S}_j}\}\cap\mathcal{R}$. We induct on the level numbers $q$ of such a MEDAG $\mathcal{G}_j$ present in the baseline communication graph $\mathcal{G}$. Let $i$ be a node in level $1$. Let $\mathcal{I}_i[k]$ be an indicator random variable\footnote{To avoid cluttering the exposition, we drop the superscript `$j$' on $\mathcal{I}_i[k]$ and certain other terms throughout the proof, since they can be inferred from context.} such that $\mathcal{I}_i[k]=1$ if node $i$ uses the update rule \eqref{eqn:open_loop} and $\mathcal{I}_i[k]=0$ if node $i$ uses the update rule \eqref{eqn:LFSE}. To make the presentation clear, we make the following assumption. Suppose node $i$ receives estimates from more than $(2f+1)$ nodes in $\mathcal{N}^{(j)}_i$ at a certain time-step $k$. Then, after removing $2f$ estimates based on the LFSE algorithm, it listens to \textit{only a single node} picked arbitrarily from $\mathcal{M}^{(j)}_i[k]$, while running \eqref{eqn:LFSE}.\footnote{The result continues to hold for the general update rule \eqref{eqn:LFSE}.} Combining  \eqref{eqn:LFSE} and \eqref{eqn:open_loop}, we obtain
\begin{equation}
\resizebox{0.8\linewidth}{!}{$
\hat{z}^{(j)}_i[k+1]=\lambda_j \left(\mathcal{I}_i[k]\hat{z}^{(j)}_i[k]+(1-\mathcal{I}_i[k])\hat{z}^{(j)}_l[k]\right),
$}
\end{equation}
where $l \in \mathcal{M}^{(j)}_i[k]$. It is easy to see that the error $e^{(j)}_i[k]=\hat{z}^{(j)}_i[k]-z^{(j)}[k]$ follows the dynamics:
\begin{equation}
\resizebox{0.8\linewidth}{!}{$
{e}^{(j)}_i[k+1]=\lambda_j \left(\mathcal{I}_i[k]{e}^{(j)}_i[k]+(1-\mathcal{I}_i[k]){e}^{(j)}_l[k]\right).
\label{eqn:stoch_err}
$}
\end{equation}
Defining $\sigma^{(j)}_i[k] \triangleq E[{({e^{(j)}_i}[k])}^2]$, and using \eqref{eqn:stoch_err}, we obtain:
\begin{align}
\sigma^{(j)}_i[k+1]=&\hspace{0.5mm}\lambda^2_j E[\mathcal{I}^2_i[k]]\sigma^{(j)}_i[k]+\lambda^2_jE[(1-\mathcal{I}_i[k])^2]\sigma^{(j)}_l[k] \nonumber \\ 
\hspace{2mm}&+\hspace{1.5mm}
\underbrace{2\lambda^2_jE[\mathcal{I}_i[k]-\mathcal{I}^2_i[k]]E[({e^{(j)}_i}[k])({e^{(j)}_l}[k])]}_{g[k]}, \nonumber\\
=&\hspace{0.5mm}\lambda^2_jp^{(j)}_i[k]\sigma^{(j)}_i[k]+\lambda^2_j(1-p^{(j)}_i[k])\sigma^{(j)}_l[k], \nonumber\\
\leq&\hspace{0.5mm} (\lambda^2_j\bar{p}) \, \sigma^{(j)}_i[k]+\lambda^2_j\sigma^{(j)}_l[k],
\label{eqn:stoch_err2}
\end{align}
where $l\in \mathcal{M}^{(j)}_i[k]$ and $p^{(j)}_i[k]$ is the probability that $\mathcal{I}_i[k]=1$. We now justify each of the above steps. For arriving at the first equality, we used the fact that $e^{(j)}_i[k]$ is independent of $\mathcal{I}_i[k]$ for any $i\in\mathcal{R}$, based on the update rules \eqref{eqn:LFSE} and \eqref{eqn:open_loop}, and the nature of the packet dropping processes. The fact that $e^{(j)}_l[k]$ (where $l \in \mathcal{M}^{(j)}_i[k]$) is independent of $\mathcal{I}_i[k]$ requires further arguments. In particular, suppose node $l$ is adversarial and has precise knowledge of the number of packets received by node $i$ at time-step $k$, i.e., suppose node $l$ knows $\mathcal{I}_i[k]$. The estimate $\hat{z}^{(j)}_l[k]$ it transmits to node $i$ might then be influenced by the knowledge of $\mathcal{I}_i[k]$. Irrespective of such knowledge, whenever $l\in\mathcal{M}^{(j)}_i[k]$, based on the LFSE update rule \eqref{eqn:LFSE} and the $f$-locality of the adversarial model, it follows from arguments identical to those in Theorem \ref{theo:main} that $e^{(j)}_l[k]$ can be expressed as a convex combination of $e^{(j)}_u[k]$ and  $e^{(j)}_v[k]$, for some $u,v \in \mathcal{L}^{(j)}_0$. Since such nodes are regular, their errors at time $k$ are independent of $\mathcal{I}_i[k]$, and converge to $0$ since $\mathcal{L}^{(j)}_0=\mathcal{S}_j\cap\mathcal{R}$. Thus, for any $l\in\mathcal{M}^{(j)}_i[k]$, $e^{(j)}_l[k]$ and $\mathcal{I}_i[k]$ are independent, and $\lim_{k\to\infty}\sigma^{(j)}_l[k]=0$. Also, since $\mathcal{I}_i[k]$ is an indicator random variable, $E[\mathcal{I}_i[k]]=E[\mathcal{I}^2_i[k]]$. Hence, $g[k]=0$ leading to the second equality in \eqref{eqn:stoch_err2}.

For arriving at the final inequality, we first note that $p^{(j)}_i[k]$ can potentially vary over time and across different nodes  since the adversarial nodes are allowed to behave arbitrarily. In particular, a compromised node may choose not to transmit estimates even if all out-going communication links from such a node are intact. Thus, since it is impossible to exactly compute $p^{(j)}_i[k]$, we instead seek to upper-bound it. To this end, note that the probability that $\mathcal{I}_i[k]=0$, i.e., the probability that node $i$ receives estimates from at least $(2f+1)$ nodes in $\mathcal{N}^{(j)}_i$ at time $k$, is lower bounded by the probability that it receives estimates from at least $(2f+1)$ nodes in $\mathcal{N}^{(j)}_i\cap\mathcal{R}$ at time $k$. The latter probability can be further lower bounded by $(1-\bar{p})$ (where $\bar{p}$ is given by \eqref{eqn:prob}) by noting that $|\mathcal{N}^{(j)}_i\cap\mathcal{R}| \geq ((m-1)f+1)$ based on the $f$-locality of the fault model. In light of the above discussion, we have $p^{(j)}_i[k] \leq \bar{p}$, leading to the last inequality in \eqref{eqn:stoch_err2}.

Finally, equation \eqref{eqn:conv_crit} implies that $\lambda^2_j\bar{p} < 1$, and in turn guarantees that $\lim_{k\to\infty}\sigma^{(j)}_i[k]=0$, based on Input to State Stability (ISS) and the foregoing discussion. 

Suppose $\lim_{k\to\infty}\sigma^{(j)}_i[k]=0$ for all nodes in levels $0$ to $q$. Consider a node $i \in \mathcal{L}^{(j)}_{q+1}[k]$. Its error dynamics evolves based on \eqref{eqn:stoch_err2}, with $g[k]=0$ for reasons discussed above, and $e^{(j)}_l[k]=\alpha^{(j)}_{il}[k]e^{(j)}_u[k]+(1-\alpha^{(j)}_{il}[k])e^{(j)}_v[k]$, for some $\alpha^{(j)}_{il}[k]\in [0,1]$, and some $u,v \in \bigcup_{r=0}^{q}\mathcal{L}^{(j)}_{r}$. The last argument follows from the LFSE update rule \eqref{eqn:LFSE}. Since $\sigma^{(j)}_u[k]$ and $\sigma^{(j)}_v[k]$ converge to $0$ based on the induction hypothesis, the term $E[(e^{(j)}_u[k])(e^{(j)}_v[k])]$ appearing in $\sigma^{(j)}_l[k]$ can be upper-bounded by $\sqrt{\sigma^{(j)}_u[k]\sigma^{(j)}_v[k]}$ by virtue of the Cauchy-Schwartz inequality. This implies $\lim_{k\to\infty}\sigma^{(j)}_l[k]=0, \forall \, l \in \mathcal{M}^{(j)}_i[k]$. The rest of the proof can be completed following similar arguments as the $q=1$ case.
\end{proof}
\begin{figure}
\begin{center}
\includegraphics[scale=0.3]{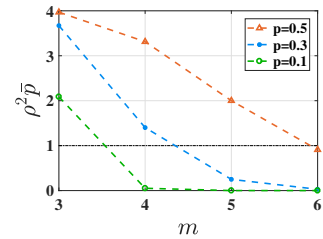}
\caption{Plot illustrating how the effective packet drop probability $\bar{p}$ can be reduced by increasing the level of robustness $m$. For this example, $\rho=2$ and $f=3$.} 
\label{fig:pbar}
\end{center}
\vspace{-0.7cm}
\end{figure}
The term $\bar{p}$ appearing in \eqref{eqn:conv_crit} and \eqref{eqn:prob} can be interpreted as the effective packet drop/erasure probability for the problem under study. With this in mind, the implications of the above result are described as follows.
\begin{remark}(\textbf{Increasing `network robustness' reduces `effective packet drop probability'}) Given knowledge of the spectral radius $\rho$ of $\mathbf{A}$, an upper-bound $f$ on the number of adversaries in the neighborhood of any regular node, and the erasure probability $p$ of the communication network, suppose we are faced with the problem of designing a network topology that guarantees mean-square stability in the sense of Definition \ref{defn:MSS}. Theorem \ref{theo:erasure1} provides an answer to this problem by quantitatively relating our notion of `strong-robustness' in Definition \ref{def:strongrobust} to the effective packet drop probability $\bar{p}$. For instance, as shown in Figure \ref{fig:pbar}, given the parameters $\rho, f$ and $p$, one can generate a plot for $\rho^2\bar{p}$ offline, and choose $m$ to satisfy the MSS criterion $\rho^2\bar{p} < 1$. Subsequently, one can proceed to design a network that is strongly $(mf+1)$-robust w.r.t. $\mathcal{S}_j, \lambda_j \in \Omega_{U}(\mathbf{A})$. It is easy to verify that $\bar{p}$ is monotonically increasing in $p$, and monotonically decreasing in $m$. In other words, for a fixed $\rho$ and $f$, one can tolerate higher erasure probabilities $p$ by increasing the robustness parameter $m$.
\end{remark}

\begin{remark} Note that when $f=0$, i.e., in the absence of adversaries, equation \eqref{eqn:conv_crit} reduces to $\rho^2 p < 1$. This condition is reminiscent of the MSS criterion for remote stabilization of an LTI system over a Bernoulli packet dropping channel \cite{joao}. This observation can be explained by viewing the contribution due to the LFSE update \eqref{eqn:LFSE} (that helps stabilize the error dynamics \eqref{eqn:stoch_err}) as an analogue of the stabilizing input in the remote stabilization problem.
\end{remark}

\begin{remark} We must justify the need for $m\geq3$ in Theorem \ref{theo:erasure1}. For a network that is strongly $(mf+1)$-robust with $m \leq 2$, each adversarial node may follow the simple strategy of never transmitting its estimate. If the adversaries compromise $f$ nodes in some set $\mathcal{N}^{(j)}_i$, where $i\in\mathcal{R}$ and $\lambda_j \in \mathcal{UO}_i$, then such a strategy might cause the regular node $i$ to run open-loop forever based on the algorithm described by the update rules \eqref{eqn:LFSE} and \eqref{eqn:open_loop}. Instead of running open-loop, suppose that if a regular node $i$ does not hear from some neighbor in $\mathcal{N}^{(j)}_i$ at time $k$, it assigns a value of $0$ to the corresponding estimate, and then employs the LFSE update rule \eqref{eqn:LFSE}. Such an approach will in general not work either, due to the following reason. Unlike the communication loss model studied in Section \ref{sec:SWLFSE}, where each regular node was \textbf{guaranteed} to receive estimates from `enough' regular neighbors over bounded intervals of time, no such guarantees can be claimed for the analog erasure channel model studied here. Thus, while strongly $(2f+1)$-robust networks sufficed in Section \ref{sec:SWLFSE}, the choice of $m\geq3$ is in fact necessary in the present context for achieving MSS  based on our specific approach. However, $m=2$ does suffice for certain variants of the analog erasure channel model, as we discuss next.
\label{rem:whymgreater3}
\end{remark}
\vspace{-0.2cm}
\subsection{Channels with erasure and delay}
In this section, we consider a variant of the analog erasure channel that accounts for the presence of random delays. To this end, let $(i,j)\in\mathcal{E}$, and let $\mathbf{v}[k]$ be a message transmitted by node $i$ to node $j$ at time-step $k$. Then, a channel with delay and erasure causes node $j$ to receive the following message:
\begin{equation}
\resizebox{0.7\hsize}{!}{$
\mathbf{r}[k]=\xi_{ij}[k]\mathbf{v}[k]+(1-\xi_{ij}[k])\mathbf{v}[k-\tau_{ij}[k]],
$}
\label{eqn:channel}
\end{equation}  
where $\xi_{ij}[k]$ is the memory-less packet dropping process described earlier, $\tau_{ij}[k] \in \mathbb{Z}_{> 0}$ is a random delay satisfying $1\leq\tau_{ij}[k]\leq T$, and $T \in \mathbb{Z}_{>0}$. In words, the channel output $\mathbf{r}[k]$ is either equal to the present channel input $\mathbf{v}[k]$ with probability $(1-e)$, or equal to a delayed channel input with probability $e$, where the delay is upper bounded by some positive constant $T$. It should be noted that the erasure channel model  considered here is a generalization of the erasure channel with delay in \cite{elia}, where the delays are constant. For this model, we have the following result.
\begin{proposition}
Given an LTI system (1) satisfying Assumption 1, and a measurement model (2), let the baseline communication graph $\mathcal{G}$ be strongly $(2f+1)$-robust w.r.t. $\mathcal{S}_j, \forall \lambda_j\in\Omega_{U}(\mathbf{A})$. Let each communication link of $\mathcal{G}$ be modeled as a channel with delay and erasure as described by equation \eqref{eqn:channel}. Then, the SW-LFSE algorithm provides identical guarantees as in Theorem \ref{theo:main}, with probability $1$.
\label{prop}
\end{proposition}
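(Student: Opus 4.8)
The plan is to show that the delay–erasure channel \eqref{eqn:channel} is, from the viewpoint of the SW-LFSE algorithm, nothing more than a bounded-delay communication model, and then to invoke Corollary \ref{corr1}. The key observation is that, in contrast to the pure-erasure model of Section \ref{sec:analog_nodelay}, the channel \eqref{eqn:channel} \emph{never} drops a packet entirely: when $\xi_{ij}[k]=1$ node $j$ receives the current input $\mathbf{v}[k]$ (delay $0$), and when $\xi_{ij}[k]=0$ it receives a delayed input $\mathbf{v}[k-\tau_{ij}[k]]$ with $1\leq\tau_{ij}[k]\leq T$. Hence, along every link and in every realization of the erasure and delay processes, each transmitted message is delivered with a delay of at most $T$.

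First I would translate this into the language of the SW-LFSE update rule. Fix $\lambda_j\in\Omega_{U}(\mathbf{A})$ and a regular node $i\in\mathcal{V}\setminus\mathcal{S}_j$. For any regular neighbor $l\in\mathcal{N}^{(j)}_i\cap\mathcal{R}$, the message delivered to $i$ at time $k$ carries a time-stamp no smaller than $k-T$; since node $i$ retains the \emph{freshest} time-stamped estimate from each neighbor, the effective delay $\tau_{il}[k]=k-\phi_{il}[k]$ satisfies $\tau_{il}[k]\leq T$ for all $k$ (after the initial transient $k\geq T$). Thus the static graph $\mathcal{G}[k]=\mathcal{G}$, together with the delay bound $T$, realizes precisely the bounded-delay hypothesis of Corollary \ref{corr1}, and does so surely — which accounts for the ``with probability $1$'' in the statement.

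With this reduction in hand, the adversarial analysis carries over verbatim from Theorem \ref{theo:main}. Because $\mathcal{G}$ is strongly $(2f+1)$-robust w.r.t. each $\mathcal{S}_j$, Lemma \ref{lemma:graph} guarantees a MEDAG $\mathcal{G}_j$, so every regular non-source node has at least $2f+1$ neighbors in $\mathcal{N}^{(j)}_i$. The two-stage filtering of SW-LFSE then neutralizes arbitrary Byzantine behavior — including the transmission of false time-stamps — by writing each value $\bar{z}^{(j)}_{il}[k]$ actually used by node $i$ as a convex combination of the (delayed) estimates of two regular neighbors in $\mathcal{L}^{(j)}_0$, exactly as in the derivation of \eqref{eq:error_dyn1}. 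Inducting on the MEDAG levels and using the bound $\tau_{il}[k]\leq T$ for regular neighbors reproduces the error recursion of Theorem \ref{theo:main}, yielding both asymptotic stability and the convergence-rate bound \eqref{eq:conv_rate}.

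The main (and essentially only) obstacle is the careful verification of the bounded-delay property: one must confirm that the ``most recent'' estimate available to node $i$ from a regular neighbor is never more than $T$ steps stale. This hinges on the fact that \emph{some} message is delivered at \emph{every} time-step with time-stamp at least $k-T$, which is exactly what distinguishes this model from the pure-erasure channel and is the reason $(2f+1)$-robustness suffices here (no regular neighbor can ever be silenced, so the open-loop pathology of Remark \ref{rem:whymgreater3} cannot arise). Once this is established, the Proposition is an immediate consequence of the deterministic guarantees of Corollary \ref{corr1}.
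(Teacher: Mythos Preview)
Your proposal is correct and takes essentially the same approach as the paper: observe that the delay--erasure channel \eqref{eqn:channel} delivers \emph{some} time-stamped estimate (with delay at most $T$) at every step, reduce to the bounded-delay setting of Corollary~\ref{corr1}, and conclude. The paper's own proof is a terse two-sentence version of precisely this argument; your additional detail on the MEDAG induction is accurate but not needed once Corollary~\ref{corr1} is invoked.
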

\vspace{-0.15cm}
\begin{proof}
 The proof follows from the following simple observation: based on the channel model \eqref{eqn:channel}, note that for each $\lambda_j \in \Omega_{U}(\mathbf{A})$, every regular node $i\in\mathcal{V}\setminus\mathcal{S}_j$ is guaranteed to receive a state estimate that is at most $T$ time-steps delayed, from each of its regular neighbors in $\mathcal{N}^{(j)}_i$, at every time-step $k$, $\forall k \geq T$. This corresponds to a special case of the bounded delay model in Corollary \ref{corr1}, and the result thus follows.
\end{proof}
\vspace{-0.4cm}
\section{Conclusion}
\vspace{-0.1cm}
We developed secure distributed state estimation algorithms that account for adversarial nodes in the presence of communication losses, both deterministic and stochastic. For the former case, we characterized the convergence rate of our algorithm in terms of certain system and network properties, and for the latter scenario involving analog erasure channels, we established that our notion of `strong-robustness' plays an important role in tolerating high erasure probabilities while ensuring mean-square stability. As future work, it would be interesting to explore if exploiting sensor memory like the SW-LFSE approach in Section \ref{sec:SWLFSE} can help tolerate higher erasure probabilities for the analog erasure channel model considered in Section \ref{sec:analog_nodelay}.

\bibliographystyle{IEEEtran} 

\bibliography{refs}

\end{document}